\providecommand{\U}[1]{\protect\rule{.1in}{.1in}}
\newcommand{\commenting}[1]{\textcolor{black}{#1}} 
\newtheorem{theorem}{Theorem}
\newtheorem{corollary}{Corollary}
\begin{document}

\title{Structural transition in interdependent networks with regular interconnections}

\author{Xiangrong Wang}
\affiliation{Faculty of Electrical Engineering, Mathematics and Computer Science, Delft University of Technology, Delft, The Netherlands.}

\author{Robert E. Kooij}
\affiliation{Faculty of Electrical Engineering, Mathematics and Computer Science, Delft University of Technology, Delft, The Netherlands.}
\affiliation{iTrust Centre for Research in Cyber Security, Singapore University of Technology and Design, Singapore.}

\author{Yamir Moreno}
\affiliation{Institute for Biocomputation and Physics of Complex Systems (BIFI), University of Zaragoza, Zaragoza 50009, Spain}
\affiliation{Department of Theoretical Physics, University of Zaragoza, Zaragoza 50009, Spain}
\affiliation{ISI Foundation, Turin, Italy}

\author{Piet Van Mieghem}
\affiliation{Faculty of Electrical Engineering, Mathematics and Computer Science, Delft University of Technology, Delft, The Netherlands.}

\date{\today}

\begin{abstract}
Networks are often made up of several layers that exhibit diverse degrees of interdependencies. A multilayer interdependent network consists of a set of graphs $G$ that are interconnected through a weighted interconnection matrix $ B $, where the weight of each inter-graph link is a non-negative real number $ p $. Various dynamical processes, such as synchronization, cascading failures in power grids, and diffusion processes, are described by the Laplacian matrix $ Q $ characterizing the whole system. For the case in which the multilayer graph is a multiplex, where the number of nodes in each layer is the same and the interconnection matrix $ B=pI $, being $ I $ the identity matrix, it has been shown that there exists a structural transition at some critical coupling, $ p^* $. This transition is such that dynamical processes are separated into two regimes: if $ p > p^* $, the network acts as a whole; whereas when $ p<p^* $, the network operates as if the graphs encoding the layers were isolated. In this paper, we extend and generalize the structural transition threshold $ p^* $ to a regular interconnection matrix $ B $ (constant row and column sum). Specifically, we provide upper and lower bounds for the transition threshold $ p^* $ in interdependent networks with a regular interconnection matrix $ B $ and derive the exact transition threshold for special scenarios using the formalism of quotient graphs. Additionally, we discuss the physical meaning of the transition threshold $ p^* $ in terms of the minimum cut and show, through a counter-example, that the structural transition does not always exist. Our results are one step forward on the characterization of more realistic multilayer networks and might be relevant for systems that deviate from the topological constrains imposed by multiplex networks.
\end{abstract}

\pacs{}
\maketitle

\section{Introduction}

An interdependent network, also called an interconnected network or a network of networks, is a multilayer network consisting of different types of networks that depend upon each other for their functioning \cite{van2016interconnectivity}. The most illustrative example of these systems is perhaps given by multilayer power networks, in which the power system is represented in one layer which in turn is connected to a communication network whose topology is encoded by another layer. The nodes in the former are controlled by those in the second, whereas at the same time the elements of the communication layer need power to function \cite{rosato2008modelling}, closing the feedback between the two graphs. The study of the structure and dynamics of interdependent networks is of utmost importance, as critical infrastructures such as the previous one, and other telecommunications, transportation, water/oil/gas-supply systems, etc, are highly interconnected and mutually depend upon each other. As for the dynamics, the framework of multilayer interdependent networks constitutes a useful approach to address catastrophic events such as large-scale blackouts, whose causes are rooted in the inherent vulnerability associated to the interdependencies between the different components of a complex multilayer system: the failure of one infrastructure propagates to another infrastructure \cite{vespignani2010complex} and so on. Indeed,  Little \cite{little2004socio} has proposed to view critical infrastructures as systems of systems so as to understand their robustness against cascading failures. 

A key aspect of multilayer networks that has received less attention is the coupling between the layers that make up the whole system, as it can modify the outcome of dynamical processes that run on top of them. For example, Buldyrev \emph{et al.} \cite{buldyrev2010catastrophic} showed that the collapse of interdependent networks occurs abruptly while the failure of individual networks is approached continuously. Also, the epidemic threshold for disease spreading processes is characterized by both the topology of each coupled network and the interconnection between them \cite{wang2013effect,cozzo2013,bonaccorsi2014epidemic,arruda2017}. On the other hand, the authors of \cite{radicchi2013abrupt} studied an interdependent model consisting of two connected networks, $ G_1 $ and $ G_2 $, with weighted interconnection links. The coupling weight between the two networks is determined by a non-negative real value $ p $, which, for instance, can be interpreted as the power dispatched by an element of the power layer in the power-communication system above. The previous interdependent system has been shown \cite{radicchi2013abrupt,martin2014algebraic} to exhibit a structural transition that takes place at a coupling value, $ p^* $, that separates two regimes: for $ p>p^* $, the interdependent network acts as a whole, whereas for $ p<p^* $, the network is structurally separated and the layers $ G_1 $ and $ G_2 $ behave as if they were isolated. The explicit expression for the transition threshold $ p^* $ is determined in \cite{sahneh2015exact}.

The model in \cite{radicchi2013abrupt} focuses on a one-to-one interconnection between nodes of different layers. This means that one node in graph $ G_1 $ connects to one and only one node in graph $ G_2 $ and vice versa. When the interconnection pattern is not one-to-one, as in most real-world examples, the determination of the transition threshold $ p^* $ is more complex. Examples of a multiple-to-multiple interconnection pattern can be found in (i) smart grids consisting of coupled sensor networks and power networks \cite{ganesan2006power,sood2009developing,parandehgheibi2013robustness} where a sensor might control multiple power stations due to cost and energy constraints; (ii) functional brain networks modeled as multilayer networks where one brain region in one layer might be functionally connected to any node in another layer \cite{Tewarie2016integrating}; and (iii) infrastructures like power networks and fiber-optic communication systems that are geographically interconnected based on spatial proximity \cite{rinaldi2001identifying,wang2016modeling}. Given the abundance of the previous examples and similar scenarios, it is thus relevant to extend the study of structural transitions to such cases.

In this paper, we investigate the structural threshold $ p^* $ of interdependent networks with a general $ k $-to-$ k $ ($ k $ is a positive integer) interconnection, see Figure \ref{fig_interGraph_model_manytomany} and Section \ref{sec_preliminary}, where we introduce these networks. In Section \ref{sec_upper_bound_p*}, we derive upper and lower bounds for the structural threshold $ p^* $ and report on certain topologies whose exact transition threshold can be calculated from its quotient graph. The physical interpretation of the structural threshold $ p^* $ with respect to the minimum cut is presented in Section \ref{sec_physical_meaning_p*}. Next, in Section \ref{sec_exact_p*}, we derive the exact structural threshold $ p^* $ for special cases of interconnectivity and present a counter example for the non-existence of the structural threshold $ p^* $. Section \ref{sec_conclusion} concludes the paper. 
%[!thb]
\begin{figure}[t!]
\centering
\includegraphics[width=0.7\textwidth]{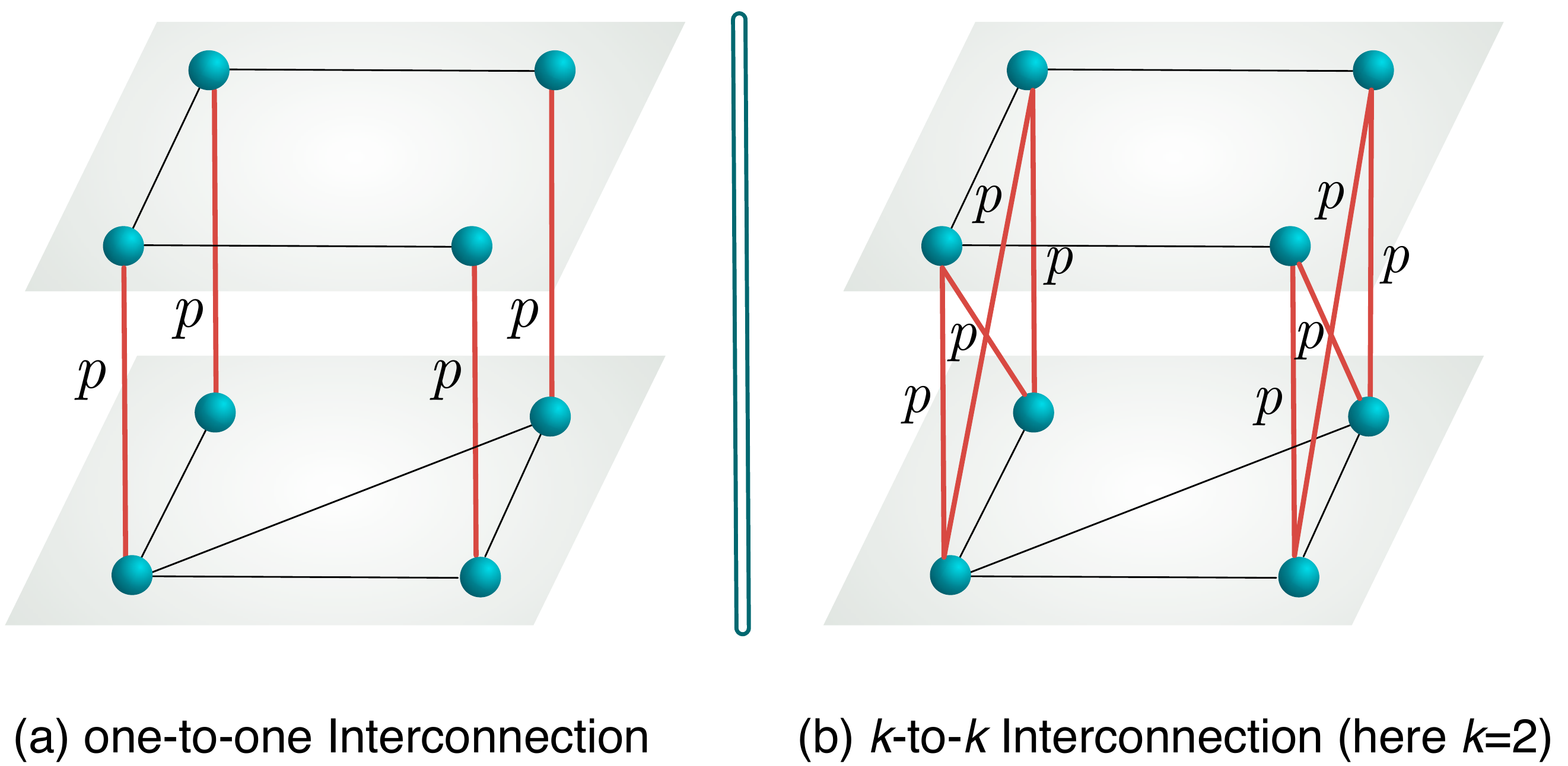}
\caption{The figure depicts two interconnected networks that only differ in their interlayer couplings: the left panel represents the case studied in  \cite{radicchi2013abrupt}, which is however not commonly found in real systems as it constrains all interlayer couplings to follow a one-to-one interconnection pattern. On the contrary, the right panel represents a scenario in which the interlayer connectivity follows a $ k $-to-$ k $ coupling scheme, and although still not fully realistic given the regularity -i.e., homogeneity- of the interconnection pattern, it is more complex and representative of real interconnected networks.}
\label{fig_interGraph_model_manytomany}
\end{figure}

\section{Interdependent networks}
\label{sec_preliminary}
Let the graph $ G(N,L) $ represents an interdependent, multilayer network consisting of two layers (networks), described by graph $ G_1 $ with $ n $ nodes and graph $ G_2 $ with $ m $ nodes. The total number of nodes in $ G $ is thus $ N=n+m $. An interdependent link is that which connects a node $ i $ in network $ G_1 $ to a node $ j $ in network $ G_2 $. The adjacency matrix $ A $ of the interdependent network $ G $ has a block structure of the form,
\begin{equation*}
A=
\begin{bmatrix}
\left(A_1\right)_{n \times n} & B_{n \times m}\\
\left(B^T\right)_{m \times n} & \left(A_2\right)_{m \times m},
\end{bmatrix}
\end{equation*}
where $ A_1 $ is the $ n \times n $ adjacency matrix of $ G_1 $, $ A_2 $ is the $ m \times m $ adjacency matrix of $ G_2 $ and $ B $ is the $ n \times m $ coupling or interconnection matrix encoding the connections between $ G_1 $ and $ G_2 $. If each interdependent link is weighted with a non-negative real number $ p $, the matrix $ B $ is a weighted matrix with elements $ b_{ij}=p $ if node $ i $ in $ G_1 $ connects to node $ j $ in $ G_2 $, otherwise $ b_{ij}=0 $. Note that the definition for $ B $ used in \cite{van2016interconnectivity} is more general, as the weights of each interdependent link can be different. Here, the matrix $ B $ corresponds to an scenario in which each interdependent link has a weight $ p $, the same for all links of this kind.

A \textbf{$ k $-to-$ k $  interconnection}, where $ k=1,\ 2,\ \cdots,\ \min(n,m) $, means that one node in graph $ G_1 $ connects to $ k $ nodes in graph $ G_2 $ and vice-versa. We only consider undirected interconnection links. The $ k $-to-$ k $ interconnection requires a square matrix $ B $ with $ n=m $, because the number $ kn $ of interconnection links computed in graph $ G_1 $ must be equal to the number $ km $ computed in graph $ G_2 $, i.e., $ kn=km $. For the rest of this article, we focus on a square interconnection matrix $ B $ with $ n=m $ and the subscript of matrix $ B $ is omitted. Furthermore, as noted before, the $ k $-to-$ k $ interconnectivity pattern is a generalization of the one-to-one scheme ($ B=p I $) studied in \cite{buldyrev2010catastrophic,radicchi2013abrupt,sahneh2015exact}.

For a square coupling matrix $ B $, a $ k $-to-$ k $ interconnection can be constructed via a circulant matrix \cite{PVM_graphspectra} with the form
\begin{equation}
B=
\begin{bmatrix}
        c_1 & c_2 & c_3 & \cdots & c_{n} \\
c_n & c_1 & c_2 & \cdots & c_{n-1} \\
c_{n-1} & c_n & c_1 & \cdots & c_{n-2} \\
 \vdots  & \vdots  & \vdots & \ddots & \vdots\\
 c_{2} & c_{3} & c_{4}  & \cdots  & c_{1}
\end{bmatrix}
\label{eq_circulant_B}
\end{equation}
where the row vector $ \left(c_1, \ c_2, \ \ldots, \ c_n\right) $ has exactly $ k $ elements equal to $ p $ and $ n-k $ elements that are $ 0 $. A circulant matrix is a matrix where each row is the same as the previous one, but the elements are shifted one position right and wrapped around at the end. Circulant matrices are commutative \cite{davis2012circulant}. For example, a symmetric matrix $ B $ for a $ 2 $-to-$ 2 $ ($ k=2 $) interconnection can be written as
\begin{equation*}
B=
\begin{bmatrix}
0 & p & 0 & \cdots & p \\
 p & 0 & p & \cdots & 0 \\
 0 & p & 0 & \cdots & 0 \\
 \vdots  & \vdots  & \vdots & \ddots & \vdots\\
 p & 0 & 0  & \cdots  & 0\\
\end{bmatrix}
\end{equation*}

Analogous to the definition of the Laplacian matrix $ Q=\Delta-A $ in a single network, where $ \Delta $ is the diagonal matrix of node degrees, we use the following diagonal matrices:
\begin{equation*}
\Delta_1 \overset{def}{=}\text{diag}\left(Bu\right)
\end{equation*}
\begin{equation*}
\Delta_2 \overset{def}{=}\text{diag}\left(B^Tu\right)
\end{equation*}
to define the Laplacian matrix $ Q $ of the interdependent network $ G $ as
\begin{equation*}
Q=
\begin{bmatrix}
Q_1+\Delta_1 & -B\\
-B^T & Q_2+\Delta_2
\end{bmatrix}
\end{equation*}
where $ Q_1 $ and $ Q_2 $ are the Laplacian matrices of networks $ G_1 $ and $ G_2 $, respectively. The all-one vector is denoted by $ u $ and the subscript of $ u $ is used if the dimension is not clear. Since the Laplacian matrix $ Q $ is symmetric, the eigenvalues of $ Q $ are non-negative and at least one is zero \cite{PVM_graphspectra}. We order the eigenvalues of the Laplacian matrix $ Q $ as $ 0=\mu_{N} \leq \mu_{N-1} \leq \cdots \leq \mu_1 $ and denote the eigenvector corresponding to the $ k $-largest eigenvalue by $ x_k $. The second smallest eigenvalue of the Laplacian matrix $ Q $ was coined by Fiedler \cite{fiedler1973algebraic} as the algebraic connectivity $ \mu_{N-1} $ of a graph $ G $. The algebraic connectivity plays a key role in different aspects related to the structure and dynamics of networks, such as in diffusion processes \cite{martin2014algebraic, gomez2013diffusion}, synchronization stability \cite{wu2005synchronization} and network robustness against failures \cite{jamakovic2008robustness}.

The Laplacian eigenvalue equation for the eigenvector $ x_k=\left(x_1^T,\ x_2^T\right)^T $, where $ x_1 $ and $ x_2 $ are $ n \times 1 $ vectors, associated to the eigenvalue $ \mu_k $ is
\begin{equation}
\begin{bmatrix}
Q_1+\Delta_1 & -B\\
-B^T & Q_2+\Delta_2
\end{bmatrix}\begin{bmatrix}
x_1\\
x_2
\end{bmatrix}=\mu_k\begin{bmatrix}
x_1\\
x_2
\end{bmatrix}
\label{eq_laplacian_eigenvalue_equation}
\end{equation}
The normalized vector $ x_N=\frac{1}{\sqrt{N}}\left(u^T_{n},\ u^T_{n}\right)^T $ is an eigenvector associated to the smallest eigenvalue $ \mu_N=0 $ of the Laplacian $ Q $. We briefly present a theorem (in \cite[Theorem 3]{van2016interconnectivity}) to introduce a non-trivial eigenvalue and eigenvector of the Laplacian $ Q $.   

\begin{theorem}
\label{non_trival_eigenmode}
Only if the $ n \times m $ interconnection matrix $ \widetilde{B} $ has a constant row sum equal to $ \frac{\mu^*}{N}m $ and a constant column sum equal to $ \frac{\mu^*}{N}n $, which we call the regularity condition for $ \widetilde{B}_{n \times m} $,
\begin{equation*}
\begin{cases}
\widetilde{B}u_m=\frac{\mu^*}{N}mu_n\\
\widetilde{B}^Tu_n=\frac{\mu^*}{N}nu_m\\
\end{cases}
\end{equation*} 
then
\begin{equation*}
x=\frac{1}{\sqrt{N}}\left[\sqrt{\frac{m}{n}}u^T_{n},\  -\sqrt{\frac{n}{m}}u^T_{m}\right]^T 
\end{equation*}
\commenting{is} an eigenvector of $ Q $ associated to the eigenvalue 
\begin{equation*}
\mu^*=\left(\frac{1}{n}+\frac{1}{m}\right)u_n^T\widetilde{B}_{n \times m}u_m
\end{equation*}
and $ u_n^T\widetilde{B}_{n \times m}u_m $ equals the sum of the elements in $ \widetilde{B} $, representing the total strength of the interconnection between graphs $ G_1 $ and $ G_2 $.
\end{theorem}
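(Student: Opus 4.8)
The plan is to prove the statement by direct substitution of the candidate vector $x$ into the Laplacian eigenvalue equation \eqref{eq_laplacian_eigenvalue_equation}, exploiting two structural facts: (i) each layer Laplacian annihilates its all-one vector, so $Q_1 u_n = 0$ and $Q_2 u_m = 0$; and (ii) the regularity hypothesis forces the coupling-degree matrices to be scalar multiples of the identity. Writing $x = (x_1^T, x_2^T)^T$ with $x_1 = \frac{1}{\sqrt{N}}\sqrt{m/n}\, u_n$ and $x_2 = -\frac{1}{\sqrt{N}}\sqrt{n/m}\, u_m$, the first thing to record is that the two lines of the regularity condition give $\Delta_1 = \text{diag}(\widetilde{B}u_m) = \frac{\mu^*}{N} m\, I_n$ and $\Delta_2 = \text{diag}(\widetilde{B}^T u_n) = \frac{\mu^*}{N} n\, I_m$.

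Next I would compute the two block rows of $Qx$ separately. For the upper block $(Q_1+\Delta_1)x_1 - \widetilde{B}x_2$, the term $Q_1 x_1$ vanishes, $\Delta_1 x_1$ equals $\frac{\mu^*}{N} m\, x_1$, and $-\widetilde{B}x_2$ is evaluated using $\widetilde{B}u_m = \frac{\mu^*}{N} m\, u_n$. Collecting terms and applying the elementary identity $m\sqrt{m/n} + \sqrt{mn} = \sqrt{m/n}\,(m+n) = N\sqrt{m/n}$, the upper block collapses to exactly $\mu^* x_1$. The lower block $-\widetilde{B}^T x_1 + (Q_2+\Delta_2)x_2$ is handled symmetrically, with $Q_2 x_2 = 0$, $\Delta_2 x_2 = \frac{\mu^*}{N} n\, x_2$, and $\widetilde{B}^T u_n = \frac{\mu^*}{N} n\, u_m$, yielding $\mu^* x_2$ by the mirror identity $n\sqrt{n/m}+\sqrt{mn}=N\sqrt{n/m}$. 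This establishes $Qx = \mu^* x$.

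It remains to pin down the stated value of $\mu^*$ and to address the ``only if'' direction. For the value, summing all entries of $\widetilde{B}$ through the row-sum condition gives $u_n^T\widetilde{B}u_m = n\cdot\frac{\mu^*}{N}m$, hence $\mu^* = \frac{N}{nm}\,u_n^T\widetilde{B}u_m = \left(\frac{1}{n}+\frac{1}{m}\right)u_n^T\widetilde{B}u_m$; the same total obtained from the column-sum condition agrees, which confirms the hypothesis is self-consistent. For necessity, I would run the identical block computation \emph{without} assuming regularity: the upper block equals $\frac{\sqrt{N}}{\sqrt{mn}}\,\widetilde{B}u_m$, and requiring $x$ to be an eigenvector forces this to be a multiple of $u_n$, namely $\widetilde{B}u_m = \frac{\mu^*}{N}m\,u_n$; the lower block likewise forces the column condition. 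Thus the computation is really an equivalence, delivering both halves at once.

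I do not anticipate a genuine obstacle, since the argument is a verification; the only real care is bookkeeping the square-root weights $\sqrt{m/n}$ and $\sqrt{n/m}$ so that the cross term $-\widetilde{B}x_2$ and the diagonal term $\Delta_1 x_1$ recombine into a clean multiple of $x_1$, and symmetrically for the lower block. Presenting necessity as a reading of the same computation, rather than as a separate argument, keeps the proof compact.
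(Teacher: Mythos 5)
Your proof is correct: the block computation under regularity (using $Q_1u_n=0$, $Q_2u_m=0$, $\Delta_1=\frac{\mu^*}{N}mI_n$, $\Delta_2=\frac{\mu^*}{N}nI_m$, and the collapse identities $m\sqrt{m/n}+\sqrt{mn}=N\sqrt{m/n}$ and $n\sqrt{n/m}+\sqrt{mn}=N\sqrt{n/m}$) yields $Qx=\mu^*x$, and your necessity reading is also sound, since $\Delta_1u_n=\widetilde{B}u_m$ makes the upper block equal $\frac{\sqrt{N}}{\sqrt{mn}}\widetilde{B}u_m$, which can be proportional to $x_1$ only if $\widetilde{B}u_m$ is a multiple of $u_n$ (and symmetrically for the column condition). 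Note that this paper does not prove the statement itself but imports it as Theorem 3 of \cite{van2016interconnectivity}; your direct-substitution verification is precisely the standard argument behind that result, so there is nothing methodologically different to compare.
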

\begin{corollary}
\label{corollary_nontrivial_eigenmode}
Consider an interdependent graph $ G $ with $ N $ nodes consisting of two graphs each with $ n $ nodes, whose interconnections are described by a weighted interconnection matrix $ B $. For a $ k $-to-$ k $ interconnection pattern with the coupling weight $ p $ on each interconnection link, the vector
\begin{equation}
x=\frac{1}{\sqrt{N}}\left[u^T_{n},\  -u^T_{n}\right]^T 
\end{equation}
is an eigenvector of the Laplacian matrix $ Q $ of graph $ G $ associated to the eigenvalue 
\begin{equation}
\label{eq_nontrivial_eigenvalue}
\mu^*=2kp
\end{equation} 
\end{corollary}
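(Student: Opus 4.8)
The plan is to obtain the corollary as the specialization of Theorem~\ref{non_trival_eigenmode} to the balanced, symmetric situation $m=n$ with $\widetilde{B}=B$ the circulant $k$-to-$k$ interconnection matrix of \eqref{eq_circulant_B}. First I would verify that the regularity condition of Theorem~\ref{non_trival_eigenmode} holds automatically for such a $B$. Because $B$ is circulant with generating row $(c_1,\dots,c_n)$ containing exactly $k$ entries equal to $p$ and the remaining $n-k$ entries equal to $0$, every row sum equals $kp$; circulance further forces every column sum to equal $kp$ as well. Hence $Bu_n=kp\,u_n$ and $B^Tu_n=kp\,u_n$, which is precisely the regularity requirement $\widetilde{B}u_m=\tfrac{\mu^*}{N}m\,u_n$ and $\widetilde{B}^Tu_n=\tfrac{\mu^*}{N}n\,u_m$ once we set $m=n$ and $N=2n$.

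Next I would read off the eigenvalue. Matching $kp=\tfrac{\mu^*}{N}m=\tfrac{\mu^*}{2n}\,n=\tfrac{\mu^*}{2}$ yields $\mu^*=2kp$; equivalently, the closed form $\mu^*=\bigl(\tfrac1n+\tfrac1m\bigr)u_n^T B\,u_n=\tfrac2n\,u_n^T B\,u_n$ evaluates to $\tfrac2n\cdot nkp=2kp$, since the total interconnection strength $u_n^T B\,u_n$ is the sum of all entries of $B$, namely $n$ rows each contributing $kp$. The eigenvector of Theorem~\ref{non_trival_eigenmode} simplifies in the same stroke: with $m=n$ the prefactors $\sqrt{m/n}$ and $\sqrt{n/m}$ both collapse to $1$, leaving $x=\tfrac1{\sqrt N}\,[\,u_n^T,\,-u_n^T\,]^T$, exactly the stated vector.

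As a self-contained alternative that avoids citing the theorem, I would substitute $x=\tfrac1{\sqrt N}[u_n^T,-u_n^T]^T$ directly into the eigenvalue equation~\eqref{eq_laplacian_eigenvalue_equation}. Using $Q_1u_n=Q_2u_n=0$ together with $\Delta_1u_n=Bu_n=kp\,u_n$ and $\Delta_2u_n=B^Tu_n=kp\,u_n$, the top block returns $(Q_1+\Delta_1)u_n+Bu_n=2kp\,u_n$ while the bottom block returns $-B^Tu_n-(Q_2+\Delta_2)u_n=-2kp\,u_n$, so that $Qx=2kp\,x$ componentwise.

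There is no genuine obstacle here: the entire content reduces to confirming that the circulant $k$-to-$k$ structure guarantees constant row and column sums equal to $kp$. The one point deserving a careful sentence is the claim that a circulant matrix with $k$ nonzero entries per row likewise has $k$ nonzero entries per column. This is immediate from the fact that each generating coefficient $c_j=p$ appears exactly once in every column under the cyclic shift, but it is the single structural input that makes the regularity condition hold, and hence it is the step I would state explicitly rather than leave implicit.
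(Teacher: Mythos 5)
Your proposal is correct and follows essentially the same route as the paper: verify that the $k$-to-$k$ pattern gives constant row and column sums $Bu_n=B^Tu_n=kp\,u_n$, so the regularity condition of Theorem~\ref{non_trival_eigenmode} holds, then specialize with $m=n$, $N=2n$ and total coupling strength $u_n^TBu_n=kpn$ to obtain $\mu^*=2kp$ and the stated eigenvector. Your additional direct substitution of $x=\tfrac{1}{\sqrt{N}}\left[u_n^T,\,-u_n^T\right]^T$ into the eigenvalue equation is a correct, self-contained check that the paper omits, but it does not change the underlying argument.
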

\begin{proof}
For a $ k $-to-$ k $ interconnection, the row and column sum of the interconnection matrix $ B $ is a constant equal to $ kp $,
\begin{equation*}
\begin{cases}
Bu_n=kpu_n\\
B^Tu_n=kpu_n\\
\end{cases}
\end{equation*}
which obeys the regularity condition in Theorem \ref{non_trival_eigenmode}. With $ n=m $ and the total coupling strength $  u_n^T\widetilde{B}_{n \times m}u_m=kpn $ in Theorem \ref{non_trival_eigenmode}, we establish the Corollary \ref{corollary_nontrivial_eigenmode}.
\end{proof}
Corollary \ref{non_trival_eigenmode} shows the existence of an eigenvalue $  \mu^*=2kp   $. Given that the coupling weight $ p $ on each interconnection link can be varied from $ 0 $ to $ \infty $, there is a value of $ p >0 $ for which $ \mu^*=2kp  $ in (\ref{eq_nontrivial_eigenvalue}) can be made the smallest positive eigenvalue, which then equals the algebraic connectivity $ \mu_{N-1} $ of the whole interdependent network $ G $. By increasing the coupling weight $ p $, the non-trivial eigenvalue $ \mu^*=2kp $ is no longer the second smallest eigenvalue. There exists a transition threshold $ p^* $ such that $ \mu_{N-1} \neq 2kp $ when $ p > p^* $. Because the eigenvalues of the Laplacian $ Q $ are continuous functions of the coupling weight $ p $, the second and third smallest eigenvalue coincide \cite{sahneh2015exact} at the point of the transition threshold $ p^* $. 

Finally, the Laplacian matrix $ Q $ for a $ k $-to-$ k $ interconnection can be written as the sum of two matrices $ Q=\begin{bmatrix}
Q_1 & O\\
O & Q_2
\end{bmatrix}+\begin{bmatrix}
kpI & -B\\
-B^T & kpI
\end{bmatrix} $. 
Moreover, according to the interlacing theorem for the sum of two matrices \cite{PVM_graphspectra}, a lower bound for the third smallest eigenvalue $ \mu_{N-2} $ of the Laplacian matrix $ Q $ follows
\begin{equation}
\mu_{N-2}\left(Q\right) \geq \min(\mu_{n-2}\left(Q_1\right),\ \mu_{n-2}\left(Q_2\right))
\label{eq_lower_bound_third_smallest_eigenvalue}
\end{equation}
where $ \mu_{n-2}\left(Q_1\right) $ and $ \mu_{n-2}\left(Q_2\right) $ are the third smallest eigenvalue of graphs $ G_1 $ and $ G_2 $, respectively. 

\section{Bounds and exact expression for the transition threshold $ p^* $ }
\label{sec_upper_bound_p*}
This section derives both upper and lower bounds for the transition threshold $ p^* $ of interdependent networks with $ k $-to-$ k $ ($ k\geq1 $) interconnection patterns. We find topologies of interdependent networks where an exact analytical expression for the transition threshold can be attained.
\subsection{Upper and lower bounds for $ p^* $}
For a given interconnection matrix $ B $ with a $ k $-to-$ k $ interconnection, i.e., $ Bu=B^Tu=kpu $, the Laplacian matrix $ Q $ can be written as
\begin{equation}
Q= \begin{bmatrix}
Q_1+kpI & -B\\
-B^T & Q_2+kpI
\end{bmatrix}
\label{eq_Q_two-to-two_interconnecion} 
\end{equation}
For any normalized vector $ x=\left(x_1^T,\ x_2^T\right)^T $, the quadratic form $ x^TQx $ of the Laplacian $ Q $ follows
\begin{equation}
x^TQx=kp+x_1^TQ_1x_1+x_2^TQ_2x_2-2x_1^TBx_2
\label{eq_quadraticQ}
\end{equation}

Let $ x_1$ be an eigenvector associated to the second smallest eigenvalue $ \mu_{n-1}(Q_1)$ of $ Q_1 $ and $ x_2=0 $. For the vector $ x=\left(x_1^T,\ 0\right)^T $, its normalization reads $ x^Tx=x_1^Tx_1=1 $. Thus, the quadratic form in (\ref{eq_quadraticQ}) follows $ x^TQx=kp+  \mu_{n-1}(Q_1)$. Analogously, we have $ x^TQx=kp+  \mu_{n-1}(Q_2)$ when $ x_1=0 $ and $ x_2 $ is the eigenvector associated to $ \mu_{n-1}(Q_2)$. Applying the Rayleigh inequality \cite{PVM_graphspectra} to the algebraic connectivity $ \mu_{N-1} $ yields
\begin{equation*}
\mu_{N-1} \leq \frac{x^TQx}{x^Tx}
\end{equation*}
With $ x=\left(x_1^T,\ 0\right)^T  $ or $ x=\left(0,\ x_2^T\right)^T  $, we arrive at
\begin{equation}
\mu_{N-1} \leq \min\left(\mu_{n-1}(Q_1),\ \mu_{n-1}(Q_2)\right)+kp.
\label{eq_algebraic_inequality}
\end{equation}
The previous equality holds when $ x $ is the eigenvector associated to the algebraic connectivity $ \mu_{N-1} $.

Next, note that the non-trivial eigenvalue $ \mu^*=2kp $ in (\ref{eq_nontrivial_eigenvalue}) corresponding to the eigenvector $ x=\frac{1}{\sqrt{N}}\left(u^T_{n},\ -u^T_{n}\right)^T $ is no longer the algebraic connectivity $ \mu_{N-1} $ when $ p>p^* $. At the transition threshold, $ p^* $, the algebraic connectivity is $ \mu_{N-1}=2kp^* $. Substituting $ \mu_{N-1}=2kp^*  $ and $ p=p^* $ in (\ref{eq_algebraic_inequality}), we arrive at an upper bound for the transition threshold $ p^* $
\begin{equation}
p^* \leq \frac{1}{k}\min(\mu_{n-1}(Q_1),\ \mu_{n-1}(Q_2))
\label{eq_upper_bound_treshold}
\end{equation} 

To obtain a lower bound, we apply the min-max theorem to the quadratic term $ x_1^TBx_2 $ in (\ref{eq_quadraticQ}), which yields
\begin{equation*}\label{key}
x_1^TBx_2 \leq \sigma_1 x_1^Tx_2
\end{equation*}
where $ \sigma_1 $ is the largest singular value of the matrix  $ B $, which is $ \sigma_1 = \sqrt{\lambda_1\left(B^TB\right) } = kp$.
According to the Cauchy–-Schwarz inequality, we have that $ x_1^Tx_2 \leq ||x_1||||x_2|| \leq \frac{1}{2} $. Thus, the quadratic form for the Laplacian matrix $ Q $ reads as
\begin{equation*}
x^TQx \geq kp+x_1^TQ_1x_1+x_2^TQ_2x_2-\sigma_1.
\end{equation*}
At the transition point $ p^* $, we have
\begin{equation*}
2kp^*  \geq \min(\mu_{n-1}(Q_1),\ \mu_{n-1}(Q_2))\left( x_1^T x_1+x_2^T x_2\right)
\end{equation*}
With $ \left( x_1^T x_1+x_2^T x_2\right) = 1 $, the transition threshold $ p^* $ is lower bounded by
\begin{equation}
p^* \geq \frac{\min(\mu_{n-1}(Q_1),\ \mu_{n-1}(Q_2))}{2k}
\label{eq_lower_bound_treshold}
\end{equation}

\begin{figure}[!thb]
\centering
\includegraphics[width=0.8\textwidth]{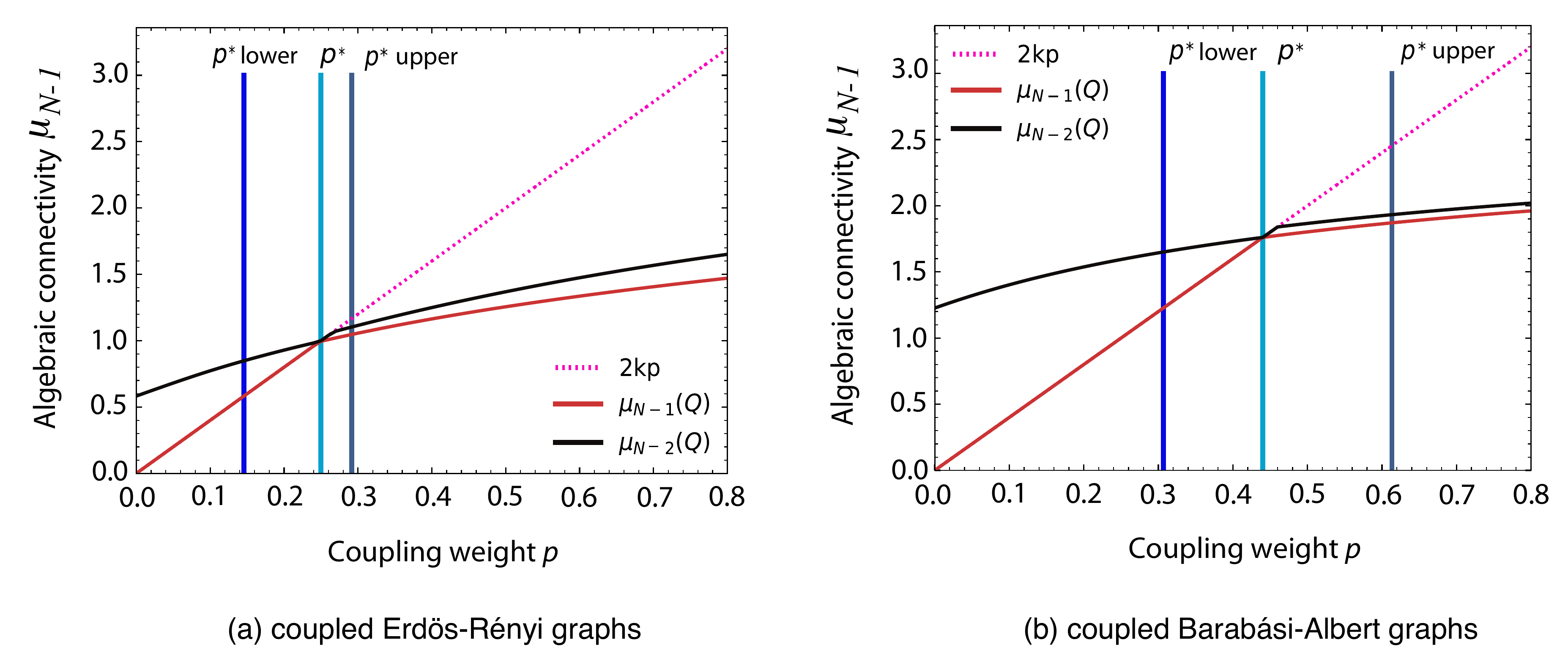}
\caption{Accuracy of the upper and lower bounds for the transition threshold $ p^* $ in interdependent networks consisting of (a) two Erd\H{o}s-R\'{e}nyi graphs $ G_q(n) $ and (b) two Barab\'{a}si-Albert graphs with $ n=500 $ and average degree $ d_{av}=6 $. The interconnection pattern is a $ 2 $-to-$ 2 $ scheme, i.e., $ k=2 $.}
\label{fig_performanceBounds_ER_BA}
\end{figure}

Figure \ref{fig_performanceBounds_ER_BA} shows the accuracy of the upper (\ref{eq_upper_bound_treshold}) and lower  (\ref{eq_lower_bound_treshold}) bounds for interdependent networks of size $ N=1000 $ that consist of two Erd\H{o}s-R\'{e}nyi graphs (panel a) $ G_q(n) $ with link density $ q $, as well as two Barab\'{a}si-Albert graphs (panel b) with average degree $ d_{av}=6 $. The interconnection pattern is a $ 2 $-to-$ 2 $ ($ k=2 $) scheme. As can be seen in the figure, the upper bound is more accurate for the two ER networks, whereas the lower bound seems to work slightly better the other way around. We also note that in \cite{radicchi2013abrupt}, it was shown that the transition threshold $ p^* $ is upper-bounded by $ p^* \leq \frac{1}{4}\mu_{N-1}(Q_1+Q_2) $ when $ B=pI $ (the $ k $-to-$ k $ interconnection with $ k=1 $). For this case, the exact value of $ p^* $ was determined in \cite{sahneh2015exact}, however, the method used can not be readily generalized to a two-to-two nor to a general $ k $-to-$ k $ ($ k \geq 2 $) interconnection pattern. 

\commenting{\subsection{Exact expression using the quotient graph}}
In this subsection, we present an analytical approach to calculate the exact transition threshold for a class of networks. The approach uses partitions of graphs in each layer and the corresponding quotient graph \cite{sanchez2014} whose transition threshold is analytically solvable. 

Let us first focus on the partitions of the graph in each layer. For a $ k $-to-$ k $ interconnection pattern, we divide the graph $ G_1 $ of $ n $ nodes into $ \frac{n}{k} $ subgraphs $ H^{(1)}_1, \ \ldots, \ H^{(1)}_\frac{n}{k} $ where each subgraph has exactly $ k $ nodes and $ \frac{n}{k} $ is an integer. In other words, $ k $ is chosen in such a way that $ k \ | \ n $, i.e., $ k $ is a divisor of $ n $.  Analogously, a similar division is performed on graph $ G_2 $, resulted in the subgraphs $ H^{(2)}_1, \ \ldots, \ H^{(2)}_\frac{n}{k} $. Without loss of generality, we demonstrate the whole approach by using subgraph $ G_1 $ \commenting{and denote $ \frac{n}{k} $ by $ m $}.
After the division, \commenting{subgraphs $ H_i $ are ordered as a chain and each subgraph connects to its neighboring subgraphs}. The adjacency matrix $ A_1 $ of graph $ G_1 $, consisting of divided subgraphs $ H_i $ and \commenting{connected as a chain}, can be written as a block matrix
\begin{equation}
A_1=\begin{bmatrix}
A_{H_1} & R_1 &\\
R_1^T & A_{H_2} & R_2\\
& \ddots & \ddots & \ddots\\
&  & R_{m}^T& A_{H_{m}}\\ 
\end{bmatrix}
\label{eq_A_1}
\end{equation}
where the $ k \times k $ adjacency matrix for a subgraph $ H_i $  is denoted by $ A_{H_i}$.  

In order to calculate the exact transition threshold, we perform a coarse-grained process by condensing each subgraph $ H_i $ into a node and two nodes are connected if two subgraphs are connected. The resulting graph corresponding to the partition is also called the quotient graph \cite{sanchez2014}. The link between nodes $ i $ and $ j $ in the quotient graph is weighted by the average degree $ \widetilde{d_{ij}} $ that a node in subgraph $ H_i $ has in subgraph $ H_j $. \commenting{The adjacency matrix of the quotient graph of $ G_1 $ reads
\begin{equation*}\label{key}
\left(\widetilde{A_1}\right)_{m\times m}=\begin{bmatrix}
\widetilde{d_{11}} & \widetilde{d_{12}}  &&\\
\widetilde{d_{21}}  & \widetilde{d_{22}} & \widetilde{d_{23}} \\
&\ddots & \ddots & \ddots\\
&&  \widetilde{d_{m,m-1}}& \widetilde{d_{mm}}\\ 
\end{bmatrix}
\end{equation*} }
If the $ k $-to-$ k $ interconnection is attained by fully connecting subgraph $ H^{(1)}_i $ in graph $ G_1 $ to the subgraph $ H^{(2)}_i $ in graph $ G_2 $, then the Laplacian of the quotient graph of the whole interdependent network is
\begin{equation}\label{eq_coarse_grained_laplacian}
\widetilde{Q} =\begin{bmatrix}
\left(\widetilde{Q_1}\right)_{m\times m} & -kpI \\
-kpI& \left(\widetilde{Q_2}\right)_{m\times m}\\
\end{bmatrix}
\end{equation}
where 
\begin{equation*}\label{key}
\left(\widetilde{Q_1} \right)_{ij}= \left\{
\begin{split}
-\widetilde{d_{ij}}  & \qquad \text{if} \quad i \neq j \\
\sum_{s = 1, s \neq i}^{m}\widetilde{d_{is}} + kp &\qquad  \text{if}   \quad i= j
\end{split}
\right. 
\end{equation*}
When the quotient Laplacian preserves the algebraic connectivity of the original graph, then by applying the method proposed in \cite{sahneh2015exact} to the quotient Laplacian (\ref{eq_coarse_grained_laplacian}), the transition occurs at 
\begin{equation}\label{eq_topology_exact_threshold}
p^* = \frac{1}{2k}\lambda_{ m-1}\left(\widetilde{Q_1} \left(\frac{\widetilde{Q_1}+\widetilde{Q_2}}{2}\right)^\dagger \widetilde{Q_2}\right)
\end{equation}
where $ \dagger $ denotes the pseudo-inverse \cite{van2017pseudoinverse} and $\lambda_{ m-1}  $ denotes the second smallest eigenvalue. Figure \ref{fig_topology_exact_threshold} shows an example for which the transition threshold is determined by (\ref{eq_topology_exact_threshold}). 

%[!htb]
\begin{figure}[t!]
\centering
	\includegraphics[width=0.8\textwidth]{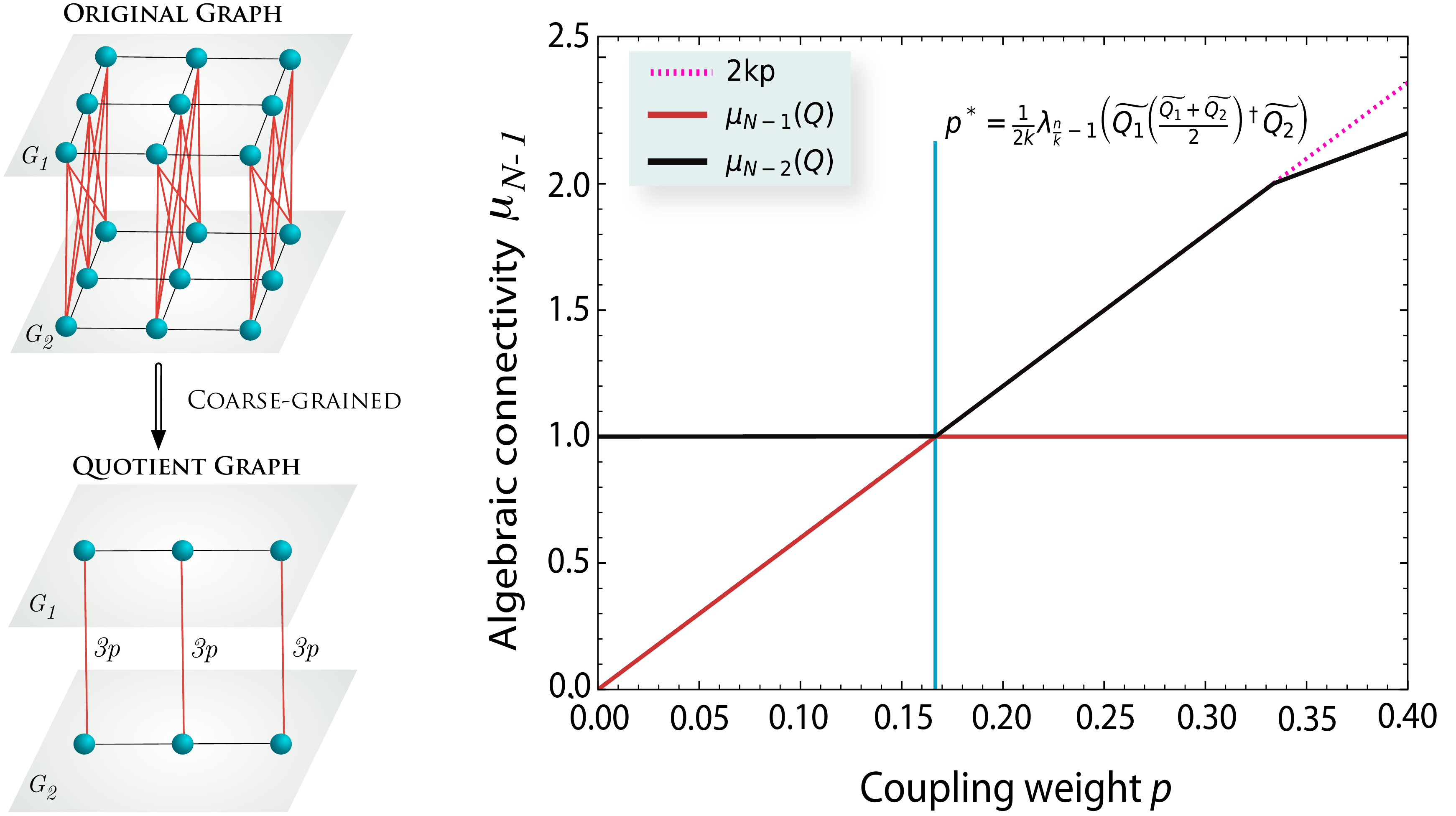}
	\caption{Example topology with $ k $-to-$ k $ ($ k=3 $) interconnections whose transition threshold is calculated exactly by (\ref{eq_topology_exact_threshold}).}
\label{fig_topology_exact_threshold}
\end{figure}

The partition of the two-layer interdependent network into subgraphs $ H_i $ with constant row sum $ R_i $, for any $ i $, is also called an equitable partition. Besides, the eigenvalues of the quotient matrix $ \widetilde{Q}  $ of an equitable partition are also eigenvalues of $ Q $. \commenting{In particular, if all the subgraphs are identical and $ R$ is the identity matrix, then the quotient Laplacian $ \widetilde{Q} $ preserves the algebraic connectivity of the original interdependent network. The proof is given in Appendix}. The approach can be applied to coarse-graining of these particular graphs such that certain Laplacian eigenvalues are preserved, which is arguably a key issue to analyze for large complex networks \cite{gfeller2007spectral,faccin2017entrograms}. Moreover, Cozzo and Moreno \cite{cozzo2016characterization} employed coarse-grained or quotient graphs to characterize multiple structural transitions of coupled (with $k=1$) multilayer networks.
 
\section{Physical meaning of $ p^* $ in terms of the minimum cut}
\label{sec_physical_meaning_p*}
In graph theory, a cut \cite{PVM_graphspectra} is defined as the partition of a graph into two disjoint subgraphs $ \widetilde{G_{1}} $ and $ \widetilde{G_{2}} $. A cut set refers to a set of links between subgraphs $ \widetilde{G_{1}} $ and $ \widetilde{G_{2}} $. For a weighted graph, the minimum cut refers to a cut set whose cut weight $ R $ is minimized, where the cut weight $ R $ is the sum of link weights over all links in the cut set. In this paper, we consider interdependent networks, $ G $, that are weighted, where each link within graphs $ G_1 $ and $ G_2 $ has weight $ 1 $ and each link between graphs $ G_1 $ and $ G_2 $ has weight $ p $. 

A normalized index vector $ y $ for a cut of a graph $ G $ into subgraphs $ \widetilde{G_{1}} $ and $ \widetilde{G_{2}} $ is defined as
\begin{equation*}
y_i=\sqrt{\frac{1}{N}}\begin{cases}
1 & \text{if node } i \in \widetilde{G_1}\\
-1 & \text{if node } i \in \widetilde{G_2}\\
\end{cases}
\end{equation*}
where $ y^Ty=1 $. The cut weight $ R $ follows \cite{PVM_graphspectra} from the quadratic form of the Laplacian matrix $ Q $
\begin{equation*}
R=\frac{Np}{4}\sum_{l \in \mathcal{L}}(y_{l^+}-y_{l^-})^2= \frac{N}{4}y^TQy
\end{equation*}
because $ y_{l^+}-y_{l^-}=\frac{2}{\sqrt{N}} $ if the starting node $ l^+ $ and the ending node $ l^- $ of a link $ l $ belong to different subgraphs, otherwise $ y_{l^+}-y_{l^-}=0 $. 
The minimum cut is \cite{PVM_graphspectra}
\begin{equation*}
R_{\min}=\frac{N}{4}\min_{y \in \mathbb{Y}} y^TQy
\end{equation*}
where $ \mathbb{Y} $ is the set of all possible normalized index vectors of the $ N $-dimensional space. Rayleigh's theorem \cite{PVM_graphspectra} states that, for any normalized vector $ y $ orthogonal to the all-one vector $ u $, we have that $\mu_{N-1} \leq \frac{y^TQy}{y^Ty} \leq y^TQy $ because $ y^Ty=1 $, and the equality holds when $ y $ is an eigenvector associated to $ \mu_{N-1} $. With $\mu_{N-1} \leq y^TQy $, the minimum cut $ R_{\min} $ follows
\begin{equation*}
R_{\min} \geq \frac{N}{4}\mu_{N-1}
\end{equation*}
If the index vector $ y $ is an eigenvector of $ G $ associated to the eigenvalue $ \mu_{N-1} $, then we obtain that $ R_{\min}  =\frac{N \mu_{N-1}}{4}$. On the other hand, Corollary \ref{corollary_nontrivial_eigenmode} implies that the eigenvalue $ \mu^*=2kp $ can be made the second smallest eigenvalue $ \mu_{N-1} $ with eigenvector $x=\frac{1}{\sqrt{N}}\left[u^T_{n},\  -u^T_{n}\right]^T $ if $ p < p^* $. If this regime, the partition corresponding to $ y=x $ results in the minimum cut with $ R_{\min}=\frac{N \mu_{N-1}}{4} $. The resulting subgraphs from that partition are exactly graphs $ G_1 $ and $ G_2 $ and the cut set contains all the interdependent links. Contrarily, when the coupling weight $ p > p^* $, the eigenvector $x=\frac{1}{\sqrt{N}}\left[u^T_{n},\  -u^T_{n}\right]^T $ is no longer an eigenvector of graph $ G $ associated to the second smallest eigenvalue $ \mu_{N-1} $ and, therefore, the minimum cut cannot be obtiened by only cutting all the interconnection links. 
 
In summary, the physical meaning of $ p^* $ in terms of the minimum cut is that if $ p < p^* $, the minimum cut can be obtained by cutting all the interconnection links, whereas above the transition point, i.e., when $p> p^* $, the minimum cut involves both links within each subgraph and interdependent edges between the two subgraphs of the interconnected network $ G $.
%%%%%%%

\section{Exact threshold for special structures of interdependent networks}
\label{sec_exact_p*}
In this section, we analytically determine the structural threshold $ p^* $ for special graphs $ G_1 $ and $ G_2 $ or a special interconnection matrix $ B $. 
\subsection{Coupled identical circulant graphs}
Let $ x_{n-1} $ be the eigenvector associated to the second smallest eigenvalue $ \mu_{n-1}(Q_1) $ of the Laplacian matrix $ Q_1 $ of graph $ G_1 $. For vector $ x=\left(x_{n-1}^T,\ x_{n-1}^T\right)^T $ and $ Q_2=Q_1 $, the eigenvalue equation in (\ref{eq_laplacian_eigenvalue_equation}) reads
\begin{equation}
\label{eq_coupled_identical_circulant}
\begin{bmatrix}
Q_1+kpI & -p\hat{B}\\
-p\hat{B}^T & Q_1+kpI
\end{bmatrix}\begin{bmatrix}
x_{n-1}\\
x_{n-1}
\end{bmatrix}= \begin{bmatrix}
\mu_{n-1}(Q_1)x_{n-1}+kpx_{n-1}-p\hat{B}x_{n-1}\\
\mu_{n-1}(Q_1)x_{n-1}+kpx_{n-1}-p\hat{B}^Tx_{n-1}
\end{bmatrix}
\end{equation}
where $ \hat{B} $ is a zero-one matrix satisfying $ \hat{B}=\frac{B}{p} $. As mentioned in section \ref{sec_preliminary}, circulant matrices are commutative. If two matrices commute, the two matrices have the same set of eigenvectors \cite{PVM_graphspectra}. When $ Q_1 $ and $ \hat{B} $ are symmetric circulant matrices, $ Q_1 $ and $ \hat{B} $ commute, i.e., $ Q_1\hat{B}=\hat{B}Q_1 $, and the eigenvectors of $ Q_1 $ and $ \hat{B} $ are the same \cite{PVM_graphspectra}. The eigenvector $ x_{n-1} $ of the Laplacian $ Q_1 $ is also an eigenvector of matrix $ \hat{B} $ belonging to the eigenvalue $ \lambda  $, where $ \lambda =\frac{x_{n-1}^T\hat{B}x_{n-1}}{x_{n-1}^Tx_{n-1}}=2x_{n-1}^T\hat{B}x_{n-1}$ because the normalization $ x^Tx=2x_{n-1}^Tx_{n-1}=1 $. Substituting $ \hat{B}x_{n-1}=\lambda x_{n-1} $ in (\ref{eq_coupled_identical_circulant}) yields  
\begin{equation*}
\begin{bmatrix}
Q_1+kpI & -p\hat{B}\\
-p\hat{B}^T & Q_1+kpI
\end{bmatrix}\begin{bmatrix}
x_{n-1}\\
x_{n-1}
\end{bmatrix}= \left(\mu_{n-1}(Q_1)+kp-\lambda p\right)\begin{bmatrix}
x_{n-1}\\
x_{n-1}
\end{bmatrix}
\end{equation*}
The vector $ x=\left(x_{n-1}^T,\ x_{n-1}^T\right)^T $ is an eigenvector of $ Q $ associated to eigenvalue   
$ \mu=\mu_{n-1}(Q_1)+(k-\lambda)p $.

When the coupling weight $ p $ is small enough, the non-trivial eigenvalue $ \mu^*=2kp $ in (\ref{eq_nontrivial_eigenvalue}) could be the algebraic connectivity $ \mu_{N-1} $ and the eigenvalue $ \mu_{n-1}(Q_1)+(k-\lambda)p  $ could be made to be the third smallest eigenvalue $ \mu_{N-2} $. As already pointed out before, by increasing the coupling weight $ p $, a transition of the algebraic connectivity $ \mu_{N-1} $ occurs, where $ \mu^*=2kp $ is no longer the second smallest one. As the transition occurs at the point $ p^* $ such that $ 2kp^*=\mu_{n-1}(Q_1)+\left(k-\lambda\right)p^* $, one gets
\begin{equation*}
p^*= \frac{\mu_{n-1}}{k+\lambda}
\end{equation*}
where $ \lambda =2x_{n-1}^T\hat{B}x_{n-1}$.

Figure \ref{fig_transition_special_topology}(a) shows the algebraic connectivity of an interdependent network that consists of two identical circulant graphs with a $ 2 $-to-$ 2 $ ($ k=2 $) interconnection. The size of each circulant graph is $ n=100 $ with average degree $ d_{av}=6 $.  When the coupling strength $ p \leq p^* $, the algebraic connectivity $ \mu_{N-1} $ is $ 4p $. When $ p \geq p^*$, the algebraic connectivity in Figure \ref{fig_transition_special_topology}(a) is analytically expressed as $ \mu_{N-1}=\mu_{n-1}(Q_1)+(2-\lambda)p  $. The transition occurs at the point $ p^*= \frac{\mu_{n-1}}{2+\lambda} $, where $ \lambda =2x_{n-1}^T\hat{B}x_{n-1}$.
 
%[!htb]
\begin{figure}[t!]
\centering
\includegraphics[width=0.8\textwidth]{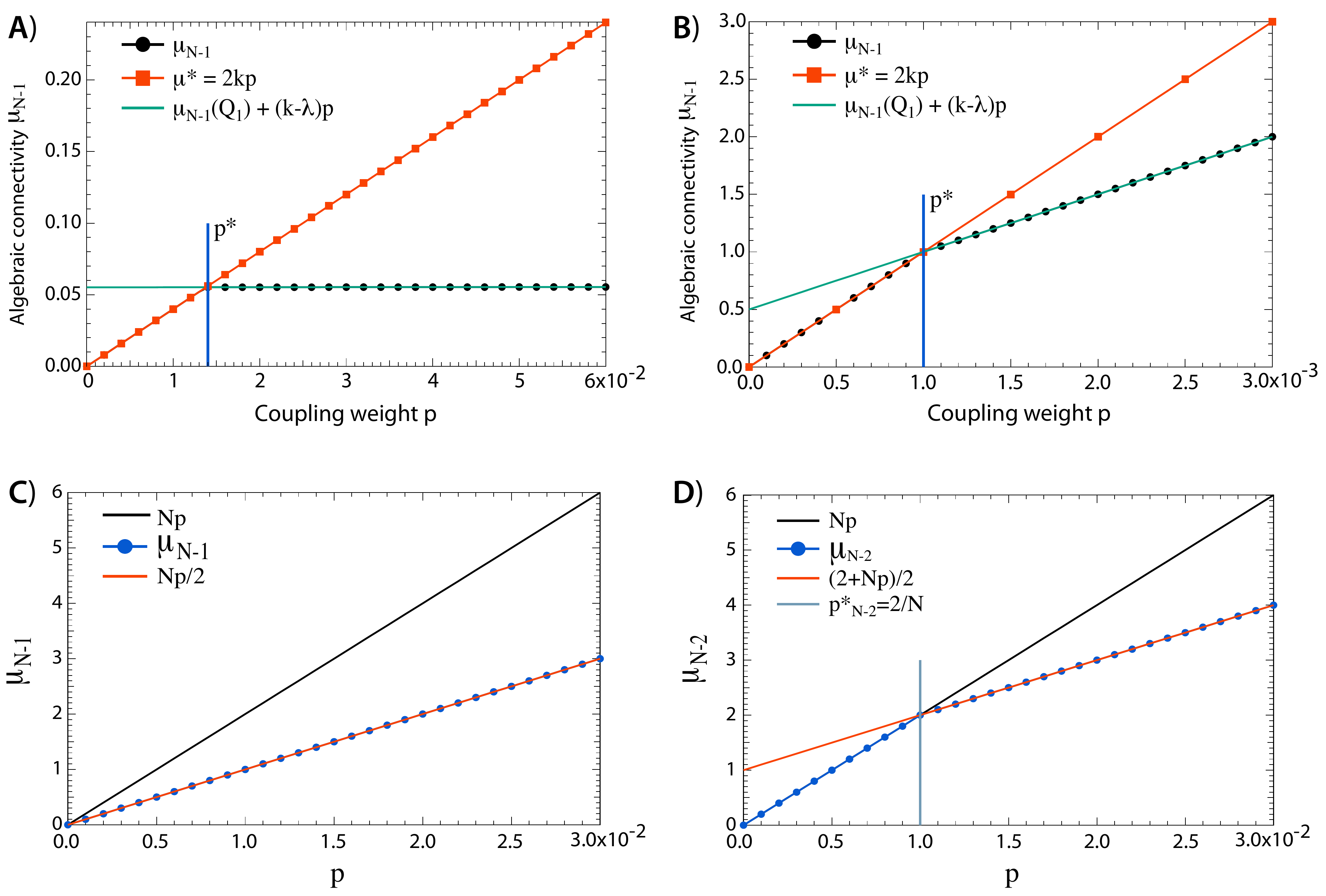}
\caption{Exact transition threshold for special structures including (a) coupled circulant graphs (b) fully coupled Erd\H{o}s-R\'{e}nyi graphs and (c),(d) fully coupled star graphs.}
\label{fig_transition_special_topology}
\end{figure} 

\subsection{$ n $-to-$ n $ interconnection}
\label{subsec_BequalsJ}
For an $ n $-to-$ n $ interconnection pattern, the Laplacian matrix of the interdependent graph $ G $ reads
\begin{equation*}
Q=\begin{bmatrix}
Q_1+pnI & -pJ_{n \times n}\\
-pJ_{n \times n} & Q_2+pnI
\end{bmatrix}
\end{equation*}
where the $ n \times n $ all-one matrix $ J $ represents that one node in graph $ G_1 $ connects to all nodes in graph $ G_2 $ and vice versa. Graph $ G $ is the join \cite{harary1969graph} of graphs $ G_1 $ and $ G_2 $ if the coupling weight $ p=1 $. 

Let $ x_{1} $ be the eigenvector associated to the eigenvalue $ \mu_{n-1}(Q_1) $ of graph $ G_1 $ and $  x_{2} $ be the eigenvector associated to the eigenvalue $ \mu_{n-1}(Q_2) $ of graph $ G_2 $. For vectors $ x=\left(x_{1}^T,\ 0\right)^T $ and $ x=\left(0,\ x_{2}^T \right)^T $, the eigenvalue equation for the Laplacian matrix $ Q $ of $ G $ can be written as

\begin{equation*}
\begin{bmatrix}
Q_1+pnI & -pJ\\
-pJ & Q_2+pnI
\end{bmatrix}\begin{bmatrix}
x_{1}\\
0
\end{bmatrix}=(\mu_{n-1}(Q_1)+np)\begin{bmatrix}
x_{1}\\
0
\end{bmatrix}
\end{equation*}
\begin{equation*}
\begin{bmatrix}
Q_1+pnI & -pJ\\
-pJ & Q_2+pnI
\end{bmatrix}\begin{bmatrix}
0 \\
x_{2}
\end{bmatrix}=(\mu_{n-1}(Q_2)+np)\begin{bmatrix}
0 \\
x_{2}
\end{bmatrix}
\end{equation*}
Similarly to the previous derivation, also for an $ n $-to-$ n $ ($ k=n $) interconnection, the non-trivial eigenvalue $ \mu^*=2np $ can be made equal to the algebraic connectivity $ \mu_{N-1}\left(Q\right) $ of the Laplacian $ Q $ if the coupling weight $ p $ is small. Moreover, the eigenvalue $ \text{min}\{\mu_{n-1}(Q_1),\ \mu_{n-1}(Q_2)\}+np $ could be the third smallest eigenvalue $ \mu_{N-2}\left(Q\right) $ for small values of $ p $. Taking into account that the transition threshold $ p^* $ occurs when $ \mu_{N-1}\left(Q\right)=\mu_{N-2}\left(Q\right) $, we get  
\begin{equation}
p*=\text{min}\left\lbrace\frac{\mu_{n-1}(Q_1)}{n},\ \frac{\mu_{n-1}(Q_2)}{n}\right\rbrace
\label{eq_p*_all_to_all_coupling}
\end{equation}

Figure \ref{fig_transition_special_topology}(b) shows the algebraic connectivity of the interdependent network consisting of two Erd\H{o}s-R\'{e}nyi graphs $ G_p(n) $ with $ n=500 $ nodes and average degree $ d_{av}=6 $. The interconnection pattern in this figure is $ n $-to-$ n $. Figure \ref{fig_transition_special_topology}(b) demonstrates that when the coupling weight $ p $ is small, the algebraic connectivity is $ \mu_{N-1}=2np $. With the increase of $ p $, the algebraic connectivity is described by $ \mu_{N-1}=\text{min}\{\mu_{n-1}(Q_1),\ \mu_{n-1}(Q_2)\}+np $. The transition occurs when $ 2np=\text{min}\{\mu_{n-1}(Q_1),\ \mu_{n-1}(Q_2)\}+np  $ and the threshold $ p^* $ obeys (\ref{eq_p*_all_to_all_coupling}).

\subsection{$ \left(n-1\right) $-to-$ \left(n-1\right) $ interconnection}

When $ B=p\left(J-I\right) $ and $ G_2=G_1 $, the eigenvalue equation for the Laplacian matrix $ Q $ reads, with vector $ x=\left(x_{n-1}^T,\ -x_{n-1}^T \right)^T $ where $ x_{n-1}  $ is an eigenvector associated to the algebraic connectivity $ \mu_{n-1}(Q_1) $ of graph $ G_1 $,  as

\begin{equation}
\begin{bmatrix}
Q_1+p(n-1)I & -p\left(J-I\right)\\
-p\left(J-I\right) & Q_1+p(n-1)I
\end{bmatrix}\begin{bmatrix}
x_{n-1}\\
-x_{n-1}
\end{bmatrix}=\left(\mu_{n-1}(Q_1)+\left(n-2\right)p\right)\begin{bmatrix}
x_{n-1}\\
-x_{n-1}
\end{bmatrix}
\end{equation}
The non-trivial eigenvalue follows $ \mu^*=2(n-1)p $ for an $ \left(n-1\right) $-to-$ \left(n-1\right) $ interconnection. When $ p $ is small, the eigenvalue $ 2(n-1)p $ could be made equal to $ \mu_{N-1} $ and the eigenvalue $ \mu_{n-1}(Q_1)+\left(n-2\right)p $ could be the third smallest eigenvalue $ \mu_{N-2} $. At the transition $ p=p^* $, we have that $\mu_{N-1}=\mu_{N-2} $ from which the threshold $ p^* $ follows as
\begin{equation*}
p*=\frac{\mu_{n-1}(Q_1)}{n}.
\end{equation*}

\subsection{A graph coupled with its complementary graph}
The complementary graph $ G_1^c $ of a graph $ G_1 $ has the same set of nodes as $ G_1 $ and two nodes are connected in $ G_1^c $ if they are not connected in $ G_1 $ and vice versa \cite{PVM_graphspectra}. The adjacency matrix of the complementary graph $ G_1^c $ is $ A_1^c=J-I-A_1 $. The Laplacian of the complementary graph $ G_1^c $ follows $ nI-J-Q_1 $. 

For an interdependent graph $ G $ consisting of a graph $ G_1 $ and its complementary graph $ G_1^c $ with an $ n $-to-$ n $ interconnection pattern, the Laplacian matrix $ Q $ of the interdependent graph $ G $ reads 
\begin{equation*}
Q=
\begin{bmatrix}
Q_1+npI & -pJ\\
-pJ & nI-J-Q_1+npI
\end{bmatrix}
\end{equation*}

Let $ x_{n-1} $ be the eigenvector associated to the eigenvalue $ \mu_{n-1} $ of the graph $ G_1 $ and $ x_1 $ be the eigenvector associated to the eigenvalue $ \mu_1 $. For vectors $ x=\left(x_{n-1}^T, \ 0\right)^T $ and $ x=\left(0, \ x_1^T\right)^T $, the eigenvalue equation for the Laplacian matrix $ Q $ of $ G $ can be written as
\begin{equation}
\begin{bmatrix}
Q_1+npI & -pJ\\
-pJ & nI-J-Q_1+npI
\end{bmatrix}\begin{bmatrix}
x_{n-1}\\
0
\end{bmatrix}=(\mu_{n-1}(Q_1)+np)\begin{bmatrix}
x_{n-1}\\
0
\end{bmatrix}
\label{eq_eigenvalue1_complementary}
\end{equation}
\begin{equation}
\begin{bmatrix}
Q_1+npI & -pJ\\
-pJ & nI-J-Q_1+npI
\end{bmatrix}\begin{bmatrix}
0 \\
x_{1}
\end{bmatrix}=(n+np-\mu_1(Q_1))\begin{bmatrix}
0 \\
x_{1}
\end{bmatrix}
\label{eq_eigenvalue2_complementary}
\end{equation}
Following the same procedure as in the previous examples, at the transition point we have that the equality $ \mu_{N-1}\left(Q\right)=\mu_{N-2}\left(Q\right) $ holds, which yields
\begin{equation*}
p*=\min\left(\frac{\mu_{n-1}(Q_1)}{n},\ 1-\frac{\mu_{1}(Q_1)}{n}\right)
\end{equation*}

\subsection{An example of the non-existence of the structural transition}
\label{sec_non_exist_p*}
In this subsection, we consider an interdependent network consisting of a star graph $ G_1 $ and its complementary graph $ G_1^c $ while the interconnection pattern is $ n $-to-$ n $. For a star graph with size $ n $, the eigenvalues of the Laplacian \cite{PVM_graphspectra} are $ 0 $, $ 1 $ with multiplicity $ n-2 $ and $ n $. 
Substituting $ \mu_{n-1}(Q_1)=1 $ and $ \mu_1(Q_1)=n  $ into eigenvalue equations (\ref{eq_eigenvalue1_complementary}) and (\ref{eq_eigenvalue2_complementary})  yields two eigenvalues $ np $ and $ np+1 $. 

When the coupling weight $ p > 0 $, the non-trivial eigenvalue $ \mu^*=2np $ cannot be the second smallest eigenvalue of the Laplacian $ Q $ because it is always larger than the eigenvalue $ np $. Hence, the transition between $ \mu^* $ and the algebraic connectivity $ \mu_{N-1}\left(Q\right) $ will never occur as shown in Figure \ref{fig_transition_special_topology}(c).
Instead, when $ p $ is small, the non-trivial eigenvalue $ \mu^*=2np $ can be made the third smallest eigenvalue $ \mu_{N-2}\left(Q\right) $. By increasing the coupling weight $ p $, the eigenvalue $ \mu^*=2np $ may no longer be the third smallest eigenvalue of the Laplacian $ Q $. There exists a threshold denoted as $ p^*_{N-2} $ such that $ \mu^*=2np $ exceeds $ \mu_{N-2}\left(Q\right) $ when $ p > p^*_{N-2} $. When $ p \leq  p^*_{N-2} $ then the third smallest eigenvalue follows $ \mu_{N-2}\left(Q\right)=2np $. Above the transition point $ p^*_{N-2}  $, the non-trivial eigenvalue $ \mu^*=2np$ exceeds eigenvalue $ 1+np $. The transition occurs when $ 2np*=1+np* $ resulting in $  p^*_{N-2} =\frac{1}{n} $. Figure \ref{fig_transition_special_topology}(d) shows that the transition occurs at the point $  p^*_{N-2} =\frac{1}{n} $.

Note, however, that in the above example, the complementary graph $ G_1^c $ of a star is a disconnected graph. The hub node in the star $ G_1 $ is an isolated node in graph $ G_1^c $. The coupling is stronger between graph $ G_1 $ and the connected component in graph $ G_1^c $ than that between graph $ G_1 $ and the isolated node in $ G_1^c $. The isolated node first decouples from the interdependent network $ G $ before the connected component in $ G_1^c $ decouples from the interdependent graph $ G $. As a result, the structural transition in $ p $ occurs at the third smallest eigenvalue rather than at the second smallest eigenvalue. The above example also agrees with the upper bound in (\ref{eq_upper_bound_treshold}) in that the threshold $ p^*=0 $ when $ \mu_{n-1}(Q_1)=0 $ or $ \mu_{n-1}(Q_2)=0 $. There is no transition between the non-trivial eigenvalue $ \mu^*=2kp $ and the algebraic connectivity $ \mu_{N-1} $, if one of the coupled graphs is disconnected. 

\section{Conclusion}
\label{sec_conclusion}
In this paper, we have studied the structural transition of interdependent networks. We first generalized the one-to-one interconnection coupling to a general $ k $-to-$ k $ inter-coupling scheme for interdependent networks. This representation of the couplings between the networks that made up the whole system is more realistic and could represent more situations of practical interest. However, we acknowledge that the interconnection matrix $ B $ representing the $ k $-to-$ k $ interconnection obeys regularity (constant row and column sum), which still represents a simplification of real systems. Nonetheless, the more complex scenario addressed here allows to deduce the non-trivial eigenvalue of such interdependent networks \cite{van2016interconnectivity}. 

For the general $ k $-to-$ k $ interconnection ($ B \neq pI $ unless $ k=1 $) studied throughout this paper, a number of new results and properties of the transition threshold $ p^* $ can be readily obtained. For connected graphs $ G_1 $ and $ G_2 $, we showed that the transition threshold $ p^* $ is upper bounded by the minimum algebraic connectivity of graphs $ G_1 $ and $ G_2 $ divided by $ k $. Additionally, we have shown that networks that are divisible to regularly interconnected subgraphs, show a transition threshold that is determined from the coarse-grained or quotient graph. These results could be important in some applications. For instance, the bounds and the exact value of the transition threshold $ p^* $ can be used to identify the interactions and the multi-layer coupling pattern of neural networks, as they have been suggested to operate, in a healthy human brain, around the transition point \cite{Tewarie2016integrating}. Our physical interpretation of the threshold $ p^* $ is also of interest. Namely, we have argued that below the transition threshold $ p^* $, the minimum cut of the network includes all the interconnection links, whereas above it, the minimum cut contains both the interconnection links between graphs $ G_1 $ and $ G_2 $ and the links within $ G_1 $ and $ G_2 $. Finally, we have derived exact expressions for the threshold $ p^* $ for some special topologies, and shown that if one of the graphs $ G_1 $ or $ G_2 $ is disconnected, then the structural threshold $ p^* $ for the algebraic connectivity does not exist. Altogether, our results allow further advances into the theory of multilayer networks and could pave the way to similar studies that consider more realistic networks.

\section*{Acknowledgement}
We are very grateful to Caterina Scoglio for valuable discussions. This research is supported by the China Scholarship Council (CSC). YM acknowledges partial support from the Government of Arag\'on, Spain through grant E36-17R, and by MINECO and FEDER funds (grant FIS2017-87519-P). 

\appendix
\section{The quotient graph preserves the algebraic connectivity under certain conditions for a particular class of graphs}
The Cartesian product $ G_1 \square G_2 $ of two graphs $ G_1 $ and $ G_2 $ with node set $ \mathcal{N}_1 $ and $ \mathcal{N}_2 $ is a graph such that (i) the node set of $ G_1 \square G_2 $ is $ \mathcal{N}_1 \times \mathcal{N}_2   $ and (ii) two nodes $ i_1i_2 $ and $ j_1j_2 $ are connected in $ G_1 \square G_2 $ if either $ i_1=j_1 $ and $ i_2 $ is connected to $ j_2 $ or $ i_2=j_2 $ and $ i_1 $ is connected to $ j_1 $. If all the subgraphs in (\ref{eq_A_1}) are identical and the matrix $ R $ is the identity matrix, the two-layer coupled network can be written as the Cartesian product of the subgraph $ H_1^{\left(1\right)} $ fully connected with subgraph $ H_1^{\left(2\right)} $ and the path graph $ P_m $ with $ m = \frac{n}{k} $ nodes. The corresponding quotient graph can be obtained by the Cartesian product of a path graph with $ 2 $ (number of layers) nodes weighted by $ kp $ and a path graph with $m $ nodes. The adjacency matrix of the original graph and the quotient graph thus follows  
\begin{equation}\label{key}
A = 
\begin{bmatrix}
A_{H_1^{\left(1\right)}} & pJ_{k \times k} \\
pJ_{k \times k}  & A_{H_1^{\left(2\right)}} 
\end{bmatrix}
\square\
A_{P_m}
\end{equation}
and
\begin{equation}\label{key}
\widetilde{A}= 
\begin{bmatrix}
0 & kp \\
kp & 0 
\end{bmatrix}
\square\
A_{P_m}
\end{equation}
where $ \square $ represents the Cartesian product. The example topology in Figure \ref{fig_topology_exact_threshold} can be obtained from Cartesian product as shown in Figure \ref{fig_topology_from_cartesian_product}.
%[!htb]
\begin{figure}[t!]
\centering
\includegraphics[width=0.7\textwidth]{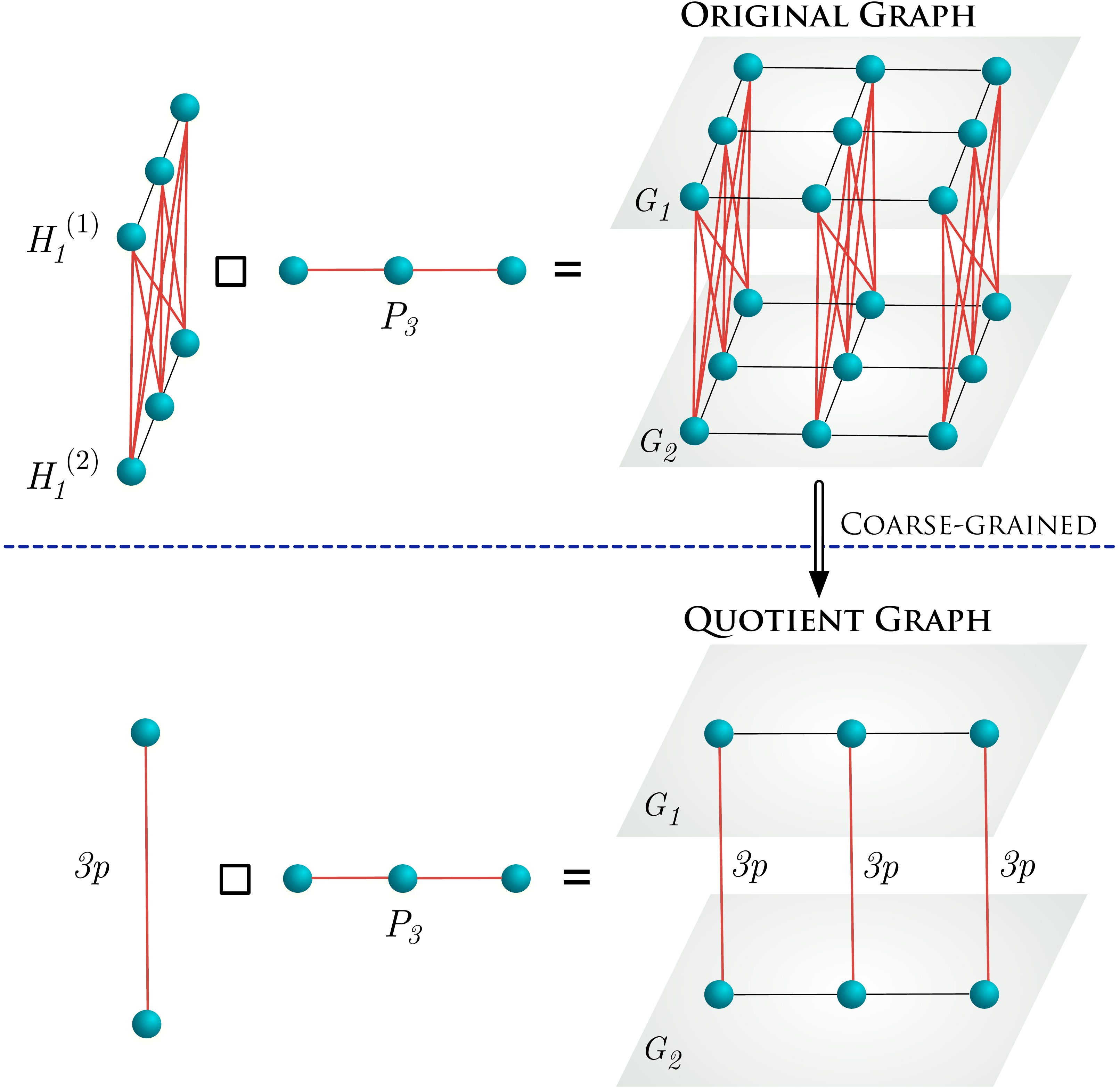}
\caption{The figure illustrates how the example topology in Figure \ref{fig_topology_exact_threshold} can be obtained from the Cartesian product of subgraphs.}
\label{fig_topology_from_cartesian_product}
\end{figure}

If two graphs $ G_1 $ and $ G_2 $ have Laplacian eigenvalues $ \mu\left(G_1\right) $ and $ \mu\left(G_2\right) $, then the Laplacian eigenvalues of the Cartesian product \cite{brouwer2011spectra} of $ G_1 $ and $ G_2 $ are $ \mu\left(G_1\right) + \mu\left(G_2\right) $. Thus, the Laplacian eigenvalues of the quotient graph are those of the path graph $ P_m $ plus those of $ P_2 $ with weight $ kp $. The second smallest eigenvalue can be either $ 2kp $ of the path $ P_2 $ or $ \mu_{m-1}\left(P_m\right) $ of the path $ P_m $. Similarly, the Laplacian eigenvalues of the original graph are those of path graph $ P_m $ plus those of the fully connected subgraphs $ H_1^{(1)} $ and $ H_1^{(2)}  $. The second smallest eigenvalue can be either $ 2kp $ of the fully connected $ H_1^{(1)} $ and $ H_1^{(2)}  $ or $ \mu_{m-1}\left(P_m\right) $ of the path $ P_m $. For this particular class of graphs, the quotient graph preserves the algebraic connectivity, which follows $ \min\{2kp,\ \mu_{m-1}\left(P_m\right)\} $, of the original graph.  
%\bibliographystyle{abbrv}
%\bibliography{Bib/biblio}

\end{document}